\def\BibTeX{{\rm B\kern-.05em{\sc i\kern-.025em b}\kern-.08em
    T\kern-.1667em\lower.7ex\hbox{E}\kern-.125emX}}
\newcommand{\p}[1]{p_{#1}}
\newcommand{\pvec}{\vvec{p}}
\newcommand{\agevec}{\vvec{\Delta}_t}
\newcommand{\age}[2]{\Delta_{#1,#2}}
\newcommand{\minmaxpayoff}{v_{k,t}^*}
\newcommand{\lidle}{\sigma_\text{I}}
\newcommand{\lsucc}{\sigma_\text{S}}
\newcommand{\lcol}{\sigma_\text{C}}
\newcommand{\SlotExpectedAge}[1]{E[\age{k}{t}({#1})|\age{k}{0}]}
\newcommand{\discountfactor}{\alpha}
\newcommand{\ageconstant}{\delta_k}
\newcommand{\allstrategyp}{\mathcal{P}}
\newcommand{\AoI}[1]{\Delta_{#1}}		% age delta_{i}
\newcommand{\probabilityvector}{\mathbf{P}}
\theoremstyle{definition}
\newtheorem{lem}{Lemma}
\newtheorem{theorem}{Theorem}
\newtheorem{definition}{Definition}
\begin{document}
%\title{Spectrum Sharing between Selfish Nodes for Timely Updates}
\title{Spectrum Sharing For Information Freshness:\\ A Repeated Games Perspective}
%\title{Spectrum Sharing Games For Information Freshness}
%\title{Infinitely Repeated Age Game}
\author{Shreya Tyagi$^{*}$, Sneihil Gopal$^{\dagger\ddagger}$, Rakesh Chaturvedi$^{\star}$, Sanjit K. Kaul$^{*}$, \\
$^{*}$Wireless Systems Lab, IIIT-Delhi, India\\
$^{\dagger}$PREP Associate, Communications Technology Laboratory, National Institute of Standards and Technology, USA\\
$^{\ddagger}$Department of Physics, Georgetown University, USA\\
$^{\star}$Department of Social Sciences \& Humanities, IIIT-Delhi, India\\
shreyat@iiitd.ac.in, sneihil.gopal@nist.gov, \{rakesh, skkaul\}@iiitd.ac.in}
\maketitle
\begin{abstract}
We consider selfish sources that send updates to a monitor over a shared wireless access. The sources would like to minimize the age of their information at the monitor. Our goal is to devise strategies that incentivize such sources to use the shared spectrum cooperatively. Earlier work has modeled such a setting using a non-cooperative one-shot game, played over a single access slot, and has shown that under certain access settings the dominant strategy of each source is to transmit in any slot, resulting in packet collisions between the sources' transmissions and causing all of them to be decoded in error at the monitor. 

We capture the interaction of the sources over an infinitely many medium access slots using infinitely repeated games. We investigate strategies that enable cooperation resulting in an efficient use of the wireless access, while disincentivizing any source from unilaterally deviating from the strategy. Formally, we are interested in strategies that are a subgame perfect Nash equilibrium (SPNE). We begin by investigating the properties of the one-stage (slot) optimal and access-fair correlated strategies. We then consider their many-slot variants, the age-fair and access-fair strategies, in the infinitely repeated game model. We prove that the access-fair and age-fair strategies are SPNEs for when collision slots are longer than successful transmission slots. Otherwise, neither is a SPNE. We end with simulations that shed light on a possible SPNE for the latter case.
\end{abstract} 
\section{Introduction}
The growing demand for applications that require real-time monitoring and actuation necessitates investigation of how such applications may share the scarce wireless spectrum resource. For example, Industry 4.0~\cite{iiot} aims to achieve intelligent manufacturing processes by deploying Cyber-Physical Systems (CPS). Other examples include traffic monitoring and control, environmental monitoring, healthcare, smart homes, and networks of autonomous vehicles. We have sensors/devices communicating measurements to servers/aggregators or to other devices. The servers process the measurements and communicate actuation commands to actuators (agents/robots/plants) that execute the commands in their deployment environment. The servers require that sensor measurements available to them are as fresh as possible. Similarly, actuators require that actuation commands are as fresh as possible at their end.

In this work, we abstract out application-specific detail and focus on enabling freshness of information (sensed measurements/ actuation commands) at a \textit{monitor} (server/ actuator) when \textit{sources} share a wireless access network. Specifically, we consider a network of selfish sources, where each source sends its information \emph{updates} to a monitor over a shared CSMA/CA like wireless access. Each source requires that its updates available at the monitor are as fresh as possible.

We quantify freshness using the metric of age of information~\cite{yates2020age-survey}. Let $u_{k}(t)$ be the timestamp of the most recent information update of source $k$ at the monitor at time $t$. The age of updates of source $k$ at the monitor is a stochastic process $\Delta_{k,t} = t - u_{k}(t)$. Each source would like to optimize the sum of expected discounted age of its updates at the monitor, over a time horizon of interest.

Earlier work~\cite{gopal-kaul-roy-chaturvedi-INFOCOM-2020} has modeled the sharing of a wireless access by such sources. The interaction between the sources over any wireless medium access slot was modeled as a non-cooperative one-shot game~\cite{game_theory_cambridge_book}. However, under certain medium access settings, the dominant strategy for any node in any slot was to transmit its update. In shared access settings, the interfering transmissions result in a \textit{collision} slot that has all transmitted updates decoded in error at the monitor, resulting in wastage of the shared spectrum.

Selfish agents may be incentivized to \emph{cooperate} when they repeatedly interact over a long enough time-horizon~\cite{game_theory_cambridge_book}. In this paper, we investigate repeated interaction between sources sharing wireless access over an infinitely many medium access slots. We model the interaction as an infinitely repeated game. Our goal is to devise strategies that result in an efficient use of the wireless spectrum by the sources while disincentivizing any source to unilaterally deviate from the strategy. Formally, we are interested in strategies that are a subgame perfect Nash equilibrium (SPNE). 

We will separately consider two medium access settings that result from relative lengths of the collision slot and the successful transmission slot (exactly one source transmits). Medium access in which a collision slot is at least as long as a successful transmission slot is exemplified by the basic access scheme in the distributed control function (DCF)~\cite{bianchi} of IEEE 802.11 WLAN standard. The case when a collision slot is shorter is exemplified by the RTS/CTS scheme in DCF. Our specific contributions include:
\begin{enumerate}
    \item We define the one-stage game, parameterized by an age vector, that is played by sources sharing the wireless medium access in every slot. We propose the one-stage optimal and the access-fair correlated strategies. 
    \item Correlated strategies are often used to enable cooperation in repeated game settings. For each strategy, we determine the conditions (Lemma~\ref{lem:vec_pIsIndivRational}) that ensure the payoffs obtained are individually rational. We solve for the probability vector that corresponds to the one-stage optimal strategy (Theorems~\ref{thm:optimalPproperties},~\ref{thm:optimalPproperties_lsucc_leq_lcol},~\ref{thm:optimalPproperties_lsucc_gt_lcol}) for the two different access settings.
    \item We define the infinitely repeated cooperation game, for which we study the two strategies of age-fair and access-fair. We prove that both the strategies are SPNE for the setting when a successful transmission slot is shorter than a collision slot (Theorems~\ref{thm:access-fair-spne} and~\ref{thm:age-fair-spne}).%, for discount factor $0 \le \alpha < 1$.
    \item We prove the strategies are \textbf{not} an SPNE (Theorem~\ref{thm:notSPNElSuccGtlCol}) for the alternate setting. We illustrate using simulations that the one-stage optimal strategy repeated every slot is an SPNE for a large range of discount factors, but only when a few sources share the network.
\end{enumerate}

The rest of the paper is organized as follows. Section~\ref{sec:related} summarizes related works. In Section~\ref{sec:model} we describe the network model. Section~\ref{sec:one-stage} details the one-stage game played by the sources in any medium access slot. We detail the proposed correlated cooperation strategies in Section~\ref{sec:correlated}. We define the infinite repeated game model and analyze and evaluate corresponding strategies in Section~\ref{sec:repeated-game}.
\section{Related Works}
\label{sec:related}
Recently there have been many works~\cite{yates2020age-survey} that contribute toward enabling freshness over networks, with freshness quantified using metrics of age of information at recipients. In~\cite{Kaul-Yates-isit2017,kosta2019age,chen2022age} authors investigated age for networks with multiple users sharing a slotted system. In~\cite{Kaul-Yates-isit2017} and~\cite{kosta2019age} authors considered scheduled and random access mechanisms. 
In~\cite{chen2022age} authors proposed distributed age-efficient transmission policies with the objective of minimizing the age over a random access channel. Similar to~\cite{Kaul-Yates-isit2017,kosta2019age,chen2022age}, we consider a CSMA/CA based access mechanism that sources use to compete for access to shared wireless spectrum.

In~\cite{nguyen-kompella-kam-jeffery-INFOCOM-2018,YinAoI2018,SGAoI2018,gopal-kaul-chaturvedi-roy-ACM-2021,kumar-vaze-IEEEjournal-2021,zheng-xiong-fan-zhong-globecom-2019,gopal-kaul-roy-chaturvedi-INFOCOM-2020} authors studied games with age as the payoff function. In~\cite{nguyen-kompella-kam-jeffery-INFOCOM-2018} and~\cite{YinAoI2018}, authors investigated adversarial settings where one player's objective is to maintain information freshness and the other player's aim is to prevent this. In~\cite{SGAoI2018}, we formulated a one-shot game to study the coexistence of Dedicated Short Range Communication (DSRC) and WiFi, where the DSRC network desires to minimize the average age of information and the WiFi network aims to maximize the average throughput. In~\cite{gopal-kaul-chaturvedi-roy-ACM-2021} using the repeated game model we provided insights, additional to~\cite{SGAoI2018}, on the interaction of age and throughput optimizing networks. In~\cite{kumar-vaze-IEEEjournal-2021} authors proposed a distributed transmission strategy for a collision model that minimizes age and transmission cost. In~\cite{zheng-xiong-fan-zhong-globecom-2019} authors considered a multi-helper-assisted wireless powered network and proposed a leader-follower strategy to establish efficient cooperation between the charging devices and the age-optimizing sensor-AP communication pair. In~\cite{gopal-kaul-roy-chaturvedi-INFOCOM-2020} we studied a one-shot multiple access game with multiple nodes and provided insights into how competing nodes that value timeliness would share the spectrum under different medium access settings. However, under certain medium access settings the strategy proposed in~\cite{gopal-kaul-roy-chaturvedi-INFOCOM-2020} led to spectrum wastage. In contrast to~\cite{gopal-kaul-roy-chaturvedi-INFOCOM-2020}, in this work we propose a repeated game to model the interaction between age optimizing nodes and devise strategies that are SPNE and result in an efficient use of wireless spectrum.
\section{Network Model}
\label{sec:model}
We have $n$ sources indexed $1,2,\ldots,n$ sharing a CSMA/CA like wireless slotted medium access to send their updates to a monitor. Sources attempt transmission at the beginning of a slot. We assume all sources can sense each other's transmissions. If exactly one source transmits during a slot, we assume that the source's update is correctly decoded by the monitor at the end of the slot. Such a slot is said to be a successful transmission slot. Let $S_k$ be the event that source $k$ transmitted successfully during a slot and let $S_{-k}$ be the event that exactly one source other than $k$ transmitted successfully during a slot. We assume that a source is able to generate-at-will a fresh update before transmission. Thus, the age of an update transmitted over a slot is $0$ at the beginning of the slot. 

A slot is an idle slot (event $I$) in case no sources transmit updates during the slot. Interference between sources, as they share the wireless access, is captured using a collision channel model. Specifically, in case more than one source transmits during a slot, none of the transmitted updates are decoded correctly by the monitor. The resulting slot is said to be a collision slot (event $C$).

The idle, successful transmission, and collision slots have lengths $\lidle$, $\lsucc$, and $\lcol$, respectively. The length $\lidle$ of an idle slot is typically much smaller than successful transmission and collision slots. Depending on the access mechanism, a successful transmission slot may be longer ($\lsucc > \lcol$) or shorter ($\lsucc \le \lcol$) than a successful transmission slot. In what follows, we will often consider the two cases separately.
\section{The One-Stage Game}
\label{sec:one-stage}

\begin{figure}[t]
\begin{center}
\includegraphics[width=0.9\columnwidth]{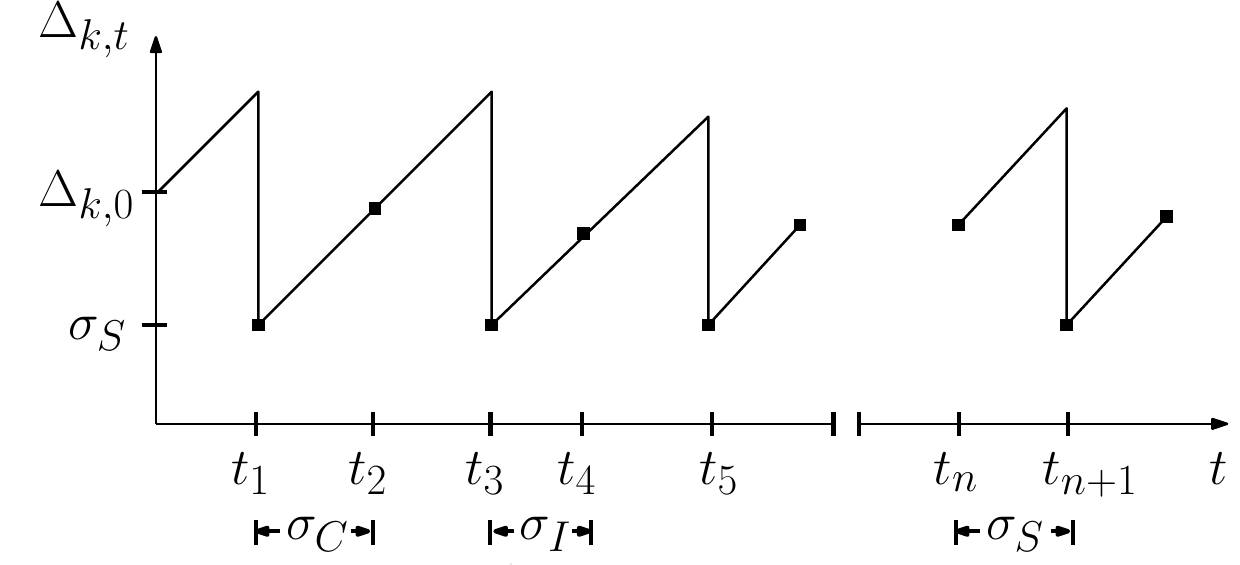}
\end{center}
\caption{\small Sample path of age $\AoI{k,t}$ of source $k$'s update at the monitor. $\AoI{k,0}$ is the initial age. When only source $k$ transmits in a slot, it sees a successful transmission, and its age resets to $\lsucc$. Otherwise, its age increases by $\lsucc$, $\lcol$, or $\lidle$ depending on whether the event $S_{-k}$, $C$, or $I$ took place. The time instants $t_{n}$, $n = 1,2,\ldots$, show the slot boundaries. A collision slot starts at $t_1$, an idle slot at $t_3$, and a slot in which the source $k$ transmits successfully starts at $t_n$.}
\label{fig:instaAoI}%
\vspace{-0.2in}
\end{figure}

The most basic interaction between sources in any slot is usefully modeled as a one-shot game. Let $\vvec{\Delta}_t = \vseq{\Delta}{1,t}{n,t}$ be the vector of ages of updates of sources $1,2,\ldots,n$ at the monitor at time $t$. Let $A_{k,t} = \{\mathcal{T}, \mathcal{I}\}$ be the set of pure strategies of any source $k$. If source $k$ chooses action $\mathcal{T}$, it transmits an update during the slot, else it idles during the slot.

The age $\age{k}{t+1}$ of source $k$'s updates at the monitor at the end of slot $t$ is a function of its age at beginning of the slot $t$ and the choices of actions by the $n$ sources during slot $t$. We have 
\begin{align}
\age{k}{t+1} = 
\begin{cases}
    \age{k}{t} + \lidle & I,\\
    \age{k}{t} + \lcol & C,\\
    \age{k}{t} + \lsucc & S_{-k},\\
    \lsucc & S_k.
\end{cases}
\label{eqn:ageEvolution}
\end{align}
The age increases at the monitor by the length of idle slot in case the event $I$ occurs, that is all sources choose the action $\mathcal{I}$ in slot $t$. In case more than one source chooses the action $\mathcal{T}$ in the slot, the slot becomes a collision slot (event $C$) of length $\lcol$. The age therefore increases by $\lcol$ at the end of such a slot. In case exactly one source other than source $k$ transmits in the slot, the event $S_{-k}$ takes place. The resulting slot is a successful transmission slot with length $\lsucc$. The age of source $k$ increases by $\lsucc$. The remaining case corresponds to when only source $k$ transmits (action $\mathcal{T}$) in the slot. The event $S_k$ occurs. As a result, the monitor receives an update from $k$ at the end of the slot that was fresh at the beginning of the transmission. The reception of the update resets the age of source $k$'s updates to the length $\lsucc$ of the successful transmission slot. Figure~\ref{fig:instaAoI} shows an example sample path of the age $\AoI{k,t}$.

The one-stage game played in slot $t$ is $G_t = (\mathcal{N}, A_t, \vvec{U}_t; \agevec)$. Here $\mathcal{N} =   \{1,2,\ldots, n\}$ is the set of sources (the players). The set $A_t = A_{1,t} \times A_{2,t}  \times \ldots  \times A_{n,t}$ is the profile of strategy spaces of the players. Each \emph{strategy profile} $a\in A_t$ is an $n$-tuple with element $k$ corresponding to an action in $A_k$. Associated with every $a\in A_t$ is a vector $\vvec{U}_t = \begin{bmatrix} U_{1,t}(a) & \ldots & U_{n,t}(a) \end{bmatrix}$ of payoffs, where $U_{k,t}(a)$ is the payoff obtained by source $k$. The $n$-tuple $a$ comprises of action $a_{k}$ by source $k$ and the $(n-1)$-tuple $a_{-k}$ of actions by the other sources, where $-k$ denotes the set of all sources $j\ne k$. The payoff obtained by source $k$ at the end of slot $t$ is determined by its choice of action $a_k$ and also the actions $a_{-k}$ chosen by the other sources.

The one-stage game $G_t$ is parameterized by the age vector $\agevec$, since this together with the actions chosen by the sources, determines the payoffs obtained by the sources at the end of slot $t$. The payoffs are $U_{k,t}(a) = - \age{k}{t+1}$, $k\in \mathcal{N}$, where $\age{k}{t+1}$ is obtained using Equation~(\ref{eqn:ageEvolution}).

We consider strategy profiles that are individually rational. That is $a \in A_t$ satisfies $U_{k,t}(a) \geq \minmaxpayoff$, for all sources $k$, where 
\begin{align}
    \minmaxpayoff = \min_{a_{-k} \in A_{-k,t}} \max_{a_k \in A_{k,t}} U_{k,t}(a_k,a_{-k})
    \label{eqn:minmaxpayoff}
\end{align}
is the minmax payoff of source $k$.

Minmax payoff $\minmaxpayoff$ is the minimum payoff that a payoff maximizing source $k$ obtains at the end of the slot, given age of its updates at the beginning of the slot. The minimum is calculated over the set containing the maximum payoff of $k$ for every $a_{-k} \in A_{-k,t}$. The payoffs $\minmaxpayoff$, $k\in \mathcal{N}$, are the basis for defining individual rationality as these are the worst possible payoffs that other sources can inflict on a source.

\begin{figure}[t]
\includegraphics[width = \columnwidth]{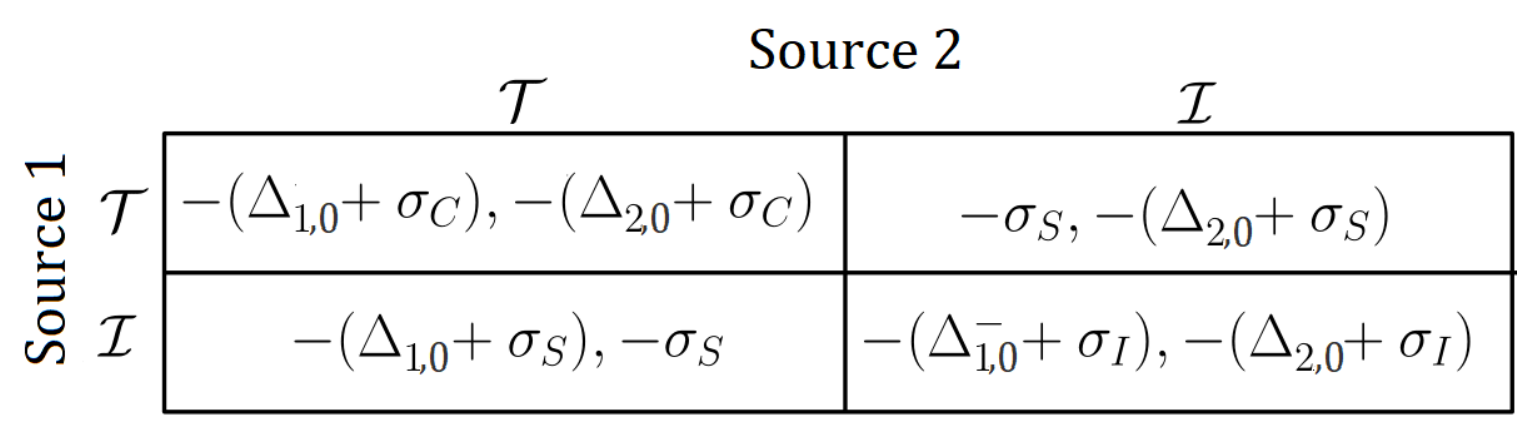}
\caption{\small Payoff matrix for the game $G$ when $\mathcal{N} = \{1,2\}$. The game is parameterized by the age vector $\begin{bmatrix}\Delta_{1,0} & \Delta_{2,0}\end{bmatrix}$ at the beginning of the slot. When both sources transmit, a collision results in their ages increasing by $\lcol$. The source payoffs at the end of the slot are $(- \Delta_{1,0} - \lcol, - \Delta_{2,0} - \lcol)$. The diagonal entries correspond to when exactly one of the sources transmits. The remaining entry is for when neither source transmits, which results in an idle slot.}
\label{fig:payoff_2player}
\end{figure}

Consider the example payoff matrix in Figure~\ref{fig:payoff_2player}. The minmax payoff for source $k$, $k=1,2$, is $-(\Delta_{k}^{-} + \lsucc)$ when $\lsucc \leq \lcol$ and $-(\Delta_{k}^{-} + \lcol)$ when $\lsucc > \lcol$. When we have $n > 2$ sources, the strategy profile $a_{-k}$ may be one in which none of the sources in $-k$ transmit during the slot, or one in which exactly one of the sources in the set $-k$ transmits, or one where more than one source in the set $-k$ transmits. For when $\lsucc \ge \lcol$, we must calculate the minimum over the set of payoffs $\{-\lsucc, -(\Delta_{k,t} + \lcol)\}$. When $\lsucc < \lcol$, the set of payoffs is $\{-\lsucc, -(\Delta_{k,t} + \lsucc), - (\Delta_{k,t} + \lcol)\}$. For both the cases, the minimum payoff is $-(\Delta_{k,t} + \lcol)$ and is independent of the relative ordering of the lengths of collision and successful transmission slots. Here we assumed, as we will throughout the paper, that the age at the beginning of a typical slot is $\ge \lsucc$. This is because age can't be reset unless a successful transmission takes places and it is reset to $\lsucc$.

In what follows, we will take the feasible payoff set to be $\mathcal{F} =\{\vvec{U}_t: \forall k \in \mathcal{N}, U_{k,t} \geq \minmaxpayoff\}$, which is the set of individually rational payoff vectors. This is motivated by the Folk Theorem~\cite{Friedman-james-TheReviewofEconomicStudies-1971}, which states that, in a repeated game setting, any payoff vector within the feasible payoff set can be supported as a subgame perfect Nash equilibrium (defined later) provided the players are sufficiently patient. Essentially, such payoff vectors can incentivize the selfish sources to cooperatively share the shared spectrum, for an appropriate discounting of future payoffs.

In the following section, we define two \emph{correlated} strategy profiles for a stage game. Later in Section~\ref{sec:repeated-game} we analyze the strategy profiles to determine whether they constitute a SPNE when played repeatedly over an infinite medium access slots. 
\section{Correlated Strategies}
\label{sec:correlated}
The use of correlated strategies, which are interpreted as strategies that players use based on commonly observed but unspecified signals, is often used in supporting cooperation in repeated games.
\begin{definition}%that players choose based on shared information. 
\textbf{Correlated Strategy} A probability distribution over the set $A_t$ of strategy profiles for the one-stage game.
\end{definition}
\noindent Define probabilities $p_k = P[S_k]$, $k = 1, \ldots, n$. Let $\p{I}$ and $\p{C}$ be the probabilities of an idle slot and a collision slot. Let $\vvec{p} = \vec{\p{1} & \ldots & \p{n} & \p{I} & \p{C}}$ be a probability vector such that $\sum_{k=1}^n \p{k} + \p{I} + \p{C} = 1$. Define the payoff $U_{k,t}(\vvec{p})$ obtained by source $k$ when strategy profiles are randomized using $\vvec{p}$.
\begin{align}
    U_{k,t}(\vvec{p}) = E_{a \sim \vvec{p}}[U_{k,t}(a)|\Delta_{k,t}].
\end{align}
Note that the way $\vvec{p}$ is defined, we don't distinguish between strategy profiles $a\in A_t$ that result in the collision event $C$. All such strategy profiles result in an increase by $\lcol$ of all sources' ages at the monitor. Let $\vvec{U}_t(\vvec{p}) = \vec{U_{1,t}(\vvec{p}) & \ldots & U_{n,t}(\vvec{p})}$ be the payoff vector. We must consider distributions from the set $\allstrategyp = \{\vvec{p}: \vvec{U}_t(\vvec{p}) \in \mathcal{F}\}$.

We consider two correlated strategies that we refer to as (a) one-stage optimal $O$ and (b) access-fair $U$. When following the one-stage optimal strategy the sources choose a strategy profile $a\in A_t$ that is drawn from the distribution defined by the probability vector $\vvec{p}^*$ that solves the problem
\begin{align}
   \underset{\vvec{p} \in \allstrategyp}{\text{Maximize}} \sum_{k=1}^{n} U_{k,t}(\vvec{p}).
   \label{opt:onestageopt}
\end{align}

The access-fair strategy uses $\vvec{p} = \begin{bmatrix} 1/n & \ldots & 1/n & 0 & 0 \end{bmatrix}$. That is any source is chosen to transmit in the slot with probability $1/n$. While simple to implement, as it doesn't need the ages at the beginning of the slot, $\vvec{p}$ may not lie in the set $\allstrategyp$ and thus may not result in individually rational payoffs. Lemma~\ref{lem:vec_pIsIndivRational} provides useful conditions on $\vvec{p}$.

\begin{lem}
Conditions on $\vvec{p}$ that ensure individually rational payoffs are as follows.
\begin{enumerate}
    \item $\lsucc \leq \lcol$. For $n=2$ sources, $\allstrategyp = \{\pvec: \p{C} = 0\}$ is a sufficient condition for $\pvec$ to be in $\allstrategyp$. When $n> 2$, any probability vector $\vvec{p}$ is in $\allstrategyp$.
    \item $\lsucc > \lcol$. The access-fair strategy is in $\allstrategyp$ if and only if $\age{k}{t} \ge n(\lsucc - \lcol)$.
\end{enumerate}
\label{lem:vec_pIsIndivRational}
\end{lem}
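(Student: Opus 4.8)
The plan is to convert the individual-rationality requirement on payoff vectors into one scalar inequality per source and then check that inequality in the two access regimes.

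First I would get a closed form for the expected payoff. Writing $q_k := \sum_{j\ne k}\p{j}$ for the probability of the event $S_{-k}$ (which, by the normalization of $\vvec{p}$, equals $1-\p{k}-\p{I}-\p{C}$), the age recursion~(\ref{eqn:ageEvolution}) and the definition of a correlated strategy give
\begin{align}
U_{k,t}(\vvec{p}) = -\p{k}\lsucc - \p{I}\bigl(\age{k}{t}+\lidle\bigr) - \p{C}\bigl(\age{k}{t}+\lcol\bigr) - q_k\bigl(\age{k}{t}+\lsucc\bigr).
\end{align}

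Next I would impose $U_{k,t}(\vvec{p})\ge\minmaxpayoff$. Recall from Section~\ref{sec:one-stage} that $\minmaxpayoff=-(\age{k}{t}+\lcol)$ when $n>2$ (and also when $n=2$, $\lsucc>\lcol$), whereas $\minmaxpayoff=-(\age{k}{t}+\lsucc)$ when $n=2$, $\lsucc\le\lcol$. Using $\p{k}+q_k=1-\p{I}-\p{C}$ to cancel the $\age{k}{t}$ terms appearing on both sides, the requirement reduces, for source $k$, to
\begin{align}
(\lsucc-\lcol)(1-\p{C})+\p{I}(\lidle-\lsucc)\ \le\ \p{k}\,\age{k}{t}
\end{align}
in the $\minmaxpayoff=-(\age{k}{t}+\lcol)$ case, and to $\p{I}(\lidle-\lsucc)+\p{C}(\lcol-\lsucc)\le\p{k}\,\age{k}{t}$ in the $n=2$, $\lsucc\le\lcol$ case; $\vvec{p}\in\allstrategyp$ iff the relevant inequality holds for all $k$.

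From this point both parts are short sign checks. For Part~1 ($\lsucc\le\lcol$) we have $\lsucc-\lcol\le0$ and $\lidle-\lsucc\le0$ (an idle slot is much shorter than a transmission slot), while $1-\p{C},\p{I}\ge0$ and $\p{k}\age{k}{t}\ge0$; hence when $n>2$ the left side of the reduced inequality is non-positive and the right side non-negative, so every $\vvec{p}$ lies in $\allstrategyp$, and when $n=2$ setting $\p{C}=0$ leaves $\p{I}(\lidle-\lsucc)\le0$ on the left, so $\{\vvec{p}:\p{C}=0\}\subseteq\allstrategyp$. For Part~2 ($\lsucc>\lcol$) I would substitute the access-fair vector ($\p{k}=1/n$, $\p{I}=\p{C}=0$) into the first reduced inequality to get $\lsucc-\lcol\le\age{k}{t}/n$; since the reduction is an equivalence, demanding this for every source is exactly $\age{k}{t}\ge n(\lsucc-\lcol)$.

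The one step needing care is the reduction: correctly accounting for the $S_{-k}$ probability $q_k$, keeping the $n=2$ minmax value distinct from the $n>2$ one, and performing the cancellation that removes $\age{k}{t}$ from both sides cleanly. Everything after that is a sign comparison and a single substitution.
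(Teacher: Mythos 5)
Your proof is correct and follows essentially the same route as the paper: compute the expected payoff $U_{k,t}(\vvec{p})$ under the correlated distribution, compare it with the appropriate minmax value ($-(\age{k}{t}+\lsucc)$ for $n=2$, $\lsucc\le\lcol$; $-(\age{k}{t}+\lcol)$ otherwise), and use $\lidle\le\lsucc\,$ together with nonnegativity of the probabilities. Your only departure is cosmetic but welcome: you reduce individual rationality to a single per-source scalar inequality valid for arbitrary $\vvec{p}$ and then sign-check or substitute, whereas the paper argues Part~1 by outcome-wise dominance and Part~2 by direct substitution of the uniform vector.
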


%Proof of Lemma 1

Next we present Theorems~\ref{thm:optimalPproperties},~\ref{thm:optimalPproperties_lsucc_leq_lcol}, and~\ref{thm:optimalPproperties_lsucc_gt_lcol} that specify the properties of $\vvec{p}^*$.

\begin{theorem}
    For any ordering of $\lsucc$ and $\lcol$, the probability vector $\vvec{p}^*$, which solves Problem~(\ref{opt:onestageopt}) and is the one-stage optimal strategy, has $\p{C} = 0$. In addition, if for every source $k$, $\age{k}{t} < n(\lsucc - \lidle)\,$, $\p{I} = 1$.
    \label{thm:optimalPproperties}
\end{theorem}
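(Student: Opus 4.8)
The plan is to turn Problem~(\ref{opt:onestageopt}) into a linear program in $\vvec{p}$ and then locate its optimum by two short perturbation arguments, one for each claim. First I would put the stage payoff in closed form: writing $S=\sum_{j=1}^{n}\p{j}$ and taking the expectation in Equation~(\ref{eqn:ageEvolution}), a direct computation gives
\begin{align}
U_{k,t}(\vvec{p}) = -(1-\p{k})\,\age{k}{t} - \p{I}\lidle - \p{C}\lcol - S\lsucc .
\label{eqn:Uk_closedform}
\end{align}
Summing over $k$ and dropping the additive constant $-\sum_{k}\age{k}{t}$, maximizing $\sum_{k}U_{k,t}(\vvec{p})$ over $\allstrategyp$ is equivalent to maximizing
\begin{align}
g(\vvec{p}) = \sum_{k=1}^{n}\p{k}\,(\age{k}{t}-n\lsucc) - n\p{I}\lidle - n\p{C}\lcol .
\label{eqn:g_def}
\end{align}
Here $\allstrategyp$ is the probability simplex on $(\p{1},\ldots,\p{n},\p{I},\p{C})$ intersected with the individual-rationality half-spaces $U_{k,t}(\vvec{p})\ge\minmaxpayoff$; the point I will lean on repeatedly is that each threshold $\minmaxpayoff$ depends only on $\agevec$ and not on $\vvec{p}$.

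Next I would show $\p{C}^{*}=0$ for any optimal $\vvec{p}^{*}$, and for either ordering of $\lsucc$ and $\lcol$. Suppose $\p{C}^{*}>0$, and form $\vvec{p}'$ by transferring that entire mass to the idle slot: $\p{I}'=\p{I}^{*}+\p{C}^{*}$, $\p{C}'=0$, the remaining entries unchanged. This $\vvec{p}'$ is still a probability vector, and by~(\ref{eqn:Uk_closedform}) we have $U_{k,t}(\vvec{p}')-U_{k,t}(\vvec{p}^{*})=\p{C}^{*}(\lcol-\lidle)>0$ for every $k$ because $\lidle$ is strictly smaller than $\lcol$; hence $\vvec{p}'$ is individually rational, so $\vvec{p}'\in\allstrategyp$. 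But by~(\ref{eqn:g_def}), $g(\vvec{p}')-g(\vvec{p}^{*})=n\,\p{C}^{*}(\lcol-\lidle)>0$, contradicting optimality. This argument uses only $\lidle<\lcol$, so it is insensitive to whether $\lsucc\le\lcol$ or $\lsucc>\lcol$.

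For the second claim I would use $\p{C}^{*}=0$ and substitute $\p{I}=1-\sum_{k}\p{k}$ into~(\ref{eqn:g_def}) to obtain $g(\vvec{p})=\sum_{k}\p{k}\bigl(\age{k}{t}-n(\lsucc-\lidle)\bigr)-n\lidle$. Under the hypothesis $\age{k}{t}<n(\lsucc-\lidle)$ for every source $k$, each coefficient of $\p{k}$ is strictly negative, so $g(\vvec{p})\le-n\lidle$ with equality if and only if $\p{k}=0$ for all $k$, i.e. $\vvec{p}=\begin{bmatrix}0 & \ldots & 0 & 1 & 0\end{bmatrix}$. That vector lies in $\allstrategyp$: its payoff is $-(\age{k}{t}+\lidle)$, which is at least $\minmaxpayoff$ since the minmax threshold is $-(\age{k}{t}+\lcol)$ (or $-(\age{k}{t}+\lsucc)$ in the two-source $\lsucc\le\lcol$ case) and $\lidle$ is the smallest of the three slot lengths. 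Hence it is the unique maximizer of~(\ref{eqn:g_def}) over $\allstrategyp$, and $\p{I}^{*}=1$.

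The payoff identity~(\ref{eqn:Uk_closedform}) is routine algebra, and the optimization is a linear program once it is written this way, so I do not expect a genuine obstacle. The only step demanding care is certifying that the perturbed and candidate vectors stay in $\allstrategyp$; this is painless here precisely because the minmax thresholds do not move with $\vvec{p}$ and every perturbation used shifts probability toward the cheapest (idle) outcome, so individual rationality is preserved automatically.
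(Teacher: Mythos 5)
Your proposal is correct and follows essentially the same route as the paper: expand the objective as a linear function of $\vvec{p}$, shift any collision mass onto the idle outcome to force $\p{C}=0$ (using $\lidle<\lcol$, so the argument is order-independent), and then compare the coefficient of $\p{I}$ with those of the $\p{k}$ to conclude $\p{I}=1$ when $\age{k}{t}<n(\lsucc-\lidle)$ for all $k$. The only difference is that you explicitly verify the perturbed vector and the candidate $\p{I}=1$ satisfy the individual-rationality constraints, a check the paper's appendix leaves implicit; this is a welcome tightening rather than a different method.
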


\begin{theorem}
    When $\lsucc \leq \lcol$, $\vvec{p}^*$ chooses the source with the maximum age at the beginning of the slot to transmit an update with probability $1$.
    \label{thm:optimalPproperties_lsucc_leq_lcol}
\end{theorem}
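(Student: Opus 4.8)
The plan is to start from Theorem~\ref{thm:optimalPproperties}, which already tells us that the optimal $\vvec{p}^*$ sets $\p{C}=0$, so the feasible candidates are distributions supported on the $n$ successful-transmission profiles and the idle profile, parameterized by $(\p{1},\ldots,\p{n},\p{I})$ with $\sum_k \p{k} + \p{I} = 1$. The first step is to write the objective $\sum_{k=1}^n U_{k,t}(\vvec{p})$ explicitly under this restriction. Using Equation~(\ref{eqn:ageEvolution}), when profile $S_j$ is played source $j$'s age becomes $\lsucc$ while every other source $k\ne j$ sees $\age{k}{t}+\lsucc$; when the idle profile is played every source's age becomes $\age{k}{t}+\lidle$. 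Summing the payoffs $U_{k,t}=-\age{k}{t+1}$ and taking the expectation over $\vvec{p}$ gives a linear function of $(\p{1},\ldots,\p{n},\p{I})$ whose coefficient on $\p{j}$ is (up to additive constants independent of $\vvec{p}$) something proportional to $\age{j}{t} - (\lsucc-\lidle)$ minus the loss from not resetting the others — i.e., the ``benefit'' of scheduling source $j$ is an increasing function of $\age{j}{t}$, and among the transmit profiles the coefficient is maximized by the source with the largest age.

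The second step is to argue that since the objective is linear over the simplex $\{(\p{1},\ldots,\p{n},\p{I}): \p{k},\p{I}\ge 0, \sum \p{k}+\p{I}=1\}$, an optimum is attained at a vertex; and by Step~1 the transmit-vertex with the best coefficient is the one corresponding to $\arg\max_k \age{k}{t}$. One then needs to rule out the idle vertex: under the standing assumption $\age{k}{t}\ge \lsucc$ and the hypothesis that we are not in the regime of Theorem~\ref{thm:optimalPproperties} forcing $\p{I}=1$ (i.e. some source has $\age{k}{t}\ge n(\lsucc-\lidle)$), a direct comparison of the objective value at the max-age transmit vertex versus the idle vertex shows the former is at least as good. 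Finally, feasibility must be checked: the max-age transmit vertex must lie in $\allstrategyp$, which for $\lsucc\le\lcol$ is handled by Lemma~\ref{lem:vec_pIsIndivRational}(1) — for $n>2$ every $\vvec{p}$ is individually rational, and for $n=2$ any $\vvec{p}$ with $\p{C}=0$ is, so the chosen vertex qualifies.

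I expect the main obstacle to be the edge-case bookkeeping rather than the core linear-programming argument: one must handle ties in $\arg\max_k \age{k}{t}$ (any convex combination of the tied transmit vertices is also optimal, and picking one of them deterministically is a valid realization of the statement), and one must carefully state the comparison with the idle vertex so that it meshes exactly with the threshold $n(\lsucc-\lidle)$ appearing in Theorem~\ref{thm:optimalPproperties} — i.e. verify that whenever $\p{I}=1$ is \emph{not} forced, scheduling the max-age source weakly dominates idling. A secondary subtlety is making the ``coefficient of $\p{j}$ is increasing in $\age{j}{t}$'' claim precise, since changing which source transmits also changes which source fails to reset; writing the sum $\sum_k \age{k}{t+1}$ and noting that only the term $\age{j}{t} - (\age{j}{t}) = 0$ versus $\age{j}{t}+\lsucc$ distinguishes the transmit profiles cleanly isolates the dependence on $\age{j}{t}$, after which the conclusion is immediate.
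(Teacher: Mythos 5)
Your proposal is correct and takes essentially the same route as the paper's own proof: using $\p{C}=0$ from Theorem~\ref{thm:optimalPproperties}, the objective is linear in $(\p{1},\ldots,\p{n},\p{I})$ over the simplex, so the optimum is at the vertex with the best coefficient, namely the profile in which the source with the largest $\age{k}{t}$ transmits, with feasibility supplied by Lemma~\ref{lem:vec_pIsIndivRational}(1). If anything, your plan is more careful than the paper's terse argument, since you explicitly handle the comparison with the idle vertex (meshing with the $n(\lsucc-\lidle)$ threshold of Theorem~\ref{thm:optimalPproperties}) and ties in the maximum age.
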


\begin{theorem}
    When $\lsucc > \lcol$:
    \begin{enumerate}
        \item If the harmonic mean of source ages, $N\prod_{k=1}^{N} \age{k}{t}/\sum_{k=1}^{N} (\prod_{j=1, j \neq k}^{N} \age{j}{t})$, is greater than $N(\lsucc - \lidle)$, then probability $p_k$ for all sources, except the one with the maximum age at the beginning of the slot, is given by $p_k = \frac{\lsucc - \lcol}{\age{k}{t}}$. This probability for the source with the maximum age is $1 - \sum_{i=1, i \neq j}^{N}p_i$. Note that this implies that $p_I = 0$.
        
        \item If the harmonic mean is less than $N(\lsucc - \lidle)$ then $p_I = \frac{\lsucc -\lcol}{\lsucc -\lidle}$ and the source with the maximum age transmits with probability $1 - p_I$. 
    \end{enumerate}
    \label{thm:optimalPproperties_lsucc_gt_lcol}
\end{theorem}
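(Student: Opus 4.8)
The plan is to turn Problem~(\ref{opt:onestageopt}) into a linear program in $\vvec{p}$, reduce that LP to a scalar optimization over $p_I$, and finish with a case analysis on the signs of two slopes. First I would make the objective explicit: averaging Equation~(\ref{eqn:ageEvolution}) over $a\sim\vvec{p}$ and invoking Theorem~\ref{thm:optimalPproperties} to set $p_C=0$ (so $\sum_{i\ne k}p_i=1-p_k-p_I$), the per-source payoff collapses, after cancellation, to
\[
U_{k,t}(\vvec{p}) = -\age{k}{t}-\lsucc + p_k\,\age{k}{t} + p_I(\lsucc-\lidle).
\]
Since $\lsucc>\lcol$ yields $\minmaxpayoff = -(\age{k}{t}+\lcol)$ (as computed in Section~\ref{sec:one-stage}, using $\age{k}{t}\ge\lsucc$), individual rationality $U_{k,t}(\vvec{p})\ge\minmaxpayoff$ is exactly the linear constraint $p_k\,\age{k}{t}+p_I(\lsucc-\lidle)\ge\lsucc-\lcol$ for every $k$. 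Summing over $k$, Problem~(\ref{opt:onestageopt}) is equivalent to maximizing $\sum_k p_k\,\age{k}{t} + n\,p_I(\lsucc-\lidle)$ over the polytope $\{p_k\ge 0,\ p_I\ge 0,\ \sum_k p_k + p_I = 1\}$ intersected with those $n$ inequalities --- a linear program, whose optimum therefore lies at a vertex.

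Next I would fix $p_I$ and optimize over $(p_1,\dots,p_n)$. Writing $c(p_I):=(\lsucc-\lcol)-p_I(\lsucc-\lidle)$, constraint $k$ reads $p_k\ge c(p_I)/\age{k}{t}$ when $c(p_I)>0$ and $p_k\ge 0$ otherwise. Let $j$ index a source of maximum age. Because $\age{j}{t}$ is the largest objective coefficient, a one-line exchange argument shows the conditional optimum pins every $p_k$, $k\ne j$, at its lower bound and puts the residual mass on $p_j$; since this pinned point is one of the feasible vertices whenever the conditional problem is feasible at all, the conditional optimal value is a continuous, piecewise-linear function $g(p_I)$ of $p_I\in[0,1]$ with a breakpoint at $p_I^{\dagger}:=(\lsucc-\lcol)/(\lsucc-\lidle)$, where $c$ vanishes (and $0<p_I^{\dagger}<1$ since $\lidle<\lcol<\lsucc$). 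Using the identity $\sum_k 1/\age{k}{t}=n/H$, with $H$ the harmonic mean in the statement, a short computation gives $g'(p_I)=\age{j}{t}\bigl(n(\lsucc-\lidle)-H\bigr)/H$ on $[0,p_I^{\dagger}]$ and $g'(p_I)=n(\lsucc-\lidle)-\age{j}{t}$ on $[p_I^{\dagger},1]$.

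The case split then finishes the argument. If $H>n(\lsucc-\lidle)$, the first slope is negative, and since $H\le\age{j}{t}$ always it follows that $\age{j}{t}>n(\lsucc-\lidle)$, making the second slope negative too; hence $g$ is decreasing on $[0,1]$, the optimum is $p_I=0$, and then $c(0)=\lsucc-\lcol>0$ forces $p_k=(\lsucc-\lcol)/\age{k}{t}$ for $k\ne j$ and $p_j=1-\sum_{i\ne j}p_i$ --- part~(1), with $p_I=0$ in particular; feasibility ($0\le p_j$ and $p_j\ge(\lsucc-\lcol)/\age{j}{t}$) follows from $(\lsucc-\lcol)\sum_k 1/\age{k}{t}=n(\lsucc-\lcol)/H<n(\lsucc-\lidle)/H<1$. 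If $H<n(\lsucc-\lidle)$, the first slope is positive, so $g$ is maximized at $p_I^{\dagger}$ or at $1$ according to the sign of the second slope $n(\lsucc-\lidle)-\age{j}{t}$: when $\age{j}{t}<n(\lsucc-\lidle)$, i.e.\ every $\age{k}{t}<n(\lsucc-\lidle)$, Theorem~\ref{thm:optimalPproperties} already gives $p_I=1$, so it remains to treat $\age{j}{t}\ge n(\lsucc-\lidle)$, where the second slope is nonpositive and the optimum is $p_I=p_I^{\dagger}=(\lsucc-\lcol)/(\lsucc-\lidle)$; there $c=0$, so $p_k=0$ for $k\ne j$ and $p_j=1-p_I^{\dagger}$ --- part~(2).

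I expect the main obstacle to be the feasibility bookkeeping that underpins the reduction to $g(p_I)$ --- confirming that the ``pin all but $j$'' point respects $p_j$'s own individual-rationality bound and nonnegativity over the relevant range of $p_I$, and that the slope expressions genuinely collapse to the stated harmonic-mean form --- together with the one delicate point that the $H<n(\lsucc-\lidle)$ branch overlaps the $p_I=1$ regime of Theorem~\ref{thm:optimalPproperties}, so that part~(2) must be read under the implicit hypothesis $\max_k\age{k}{t}\ge n(\lsucc-\lidle)$.
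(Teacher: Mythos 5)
Your proposal is correct, and it reaches the theorem by a route that is more explicit than the paper's. The paper argues geometrically: it views the pure-strategy payoff points as vertices of a simplex, computes the hyperplane through the $n$ ``source $k$ transmits'' payoffs (normal given by products of ages), and uses the harmonic-mean condition to decide whether the all-idle payoff lies inside that simplex; it then imposes the individual-rationality constraints $p_k \ge (\lsucc-\lcol)/\age{k}{t}$ (Case 1) or $\p{I} \ge (\lsucc-\lcol)/(\lsucc-\lidle)$ (Case 2) and pushes all slack onto the maximum-age source. You instead write the expected payoff in closed form, $U_{k,t}(\vvec{p}) = -\age{k}{t}-\lsucc + p_k\age{k}{t} + \p{I}(\lsucc-\lidle)$, recognize the problem as a linear program, and reduce it to a one-dimensional parametric optimization in $\p{I}$ whose two slopes, $\age{j}{t}\bigl(n(\lsucc-\lidle)-H\bigr)/H$ and $n(\lsucc-\lidle)-\age{j}{t}$, produce exactly the harmonic-mean dichotomy; your exchange argument pinning all $p_k$, $k\ne j$, at their lower bounds plays the role of the paper's ``assign lowest possible probability except to the max-age source.'' What your approach buys is (i) an explicit verification of feasibility in Case 1, namely $n(\lsucc-\lcol)/H<1$, which the paper leaves implicit, and (ii) the observation that part (2) overlaps the regime $\max_k\age{k}{t} < n(\lsucc-\lidle)$ where Theorem~\ref{thm:optimalPproperties} forces $\p{I}=1$, so part (2) must be read under the implicit hypothesis $\max_k\age{k}{t}\ge n(\lsucc-\lidle)$ --- a caveat the paper's proof does not mention. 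The only loose end, which you yourself flag and which does not affect the two cases actually invoked, is the feasibility of the pinned point for intermediate values of $\p{I}$; since in Case 1 you verify feasibility at $\p{I}=0$ and in Case 2 the optimum sits at $\p{I}=(\lsucc-\lcol)/(\lsucc-\lidle)$ where the lower bounds vanish, the monotonicity argument on the feasible range suffices.
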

The proofs of Lemma~\ref{lem:vec_pIsIndivRational} and Theorems~\ref{thm:optimalPproperties},~\ref{thm:optimalPproperties_lsucc_leq_lcol}, and~\ref{thm:optimalPproperties_lsucc_gt_lcol} can be found in the appendix.
\section{Infinitely Repeated Cooperation Game}
\label{sec:repeated-game}
In an infinitely repeated game $G^\infty$, the stage game $G$ is repeated over infinitely many slots indexed $t = 0, 1, \ldots$. As has been observed, for example in~\cite{game_theory_cambridge_book}, a repeated game model best represents the interaction of multiple sources sharing a network. The repeated game begins with slot $t = 0$ with age $\vvec{\Delta}_0$ at the beginning of the slot. Sources choose actions from their time-invariant strategy sets $\{\mathcal{T}, \mathcal{I}\}$. Consequently, the age vector gets updated to $\vvec{\Delta}_1$ at the end of slot $0$, which becomes an input to the stage game in slot $t=1$, and so on. 

The age vector $\vvec{\Delta}_t$ serves as the state of the model that fully summarizes the payoff-relevant interactions of the sources up to the beginning of slot $t$. Consequently, in our infinitely repeated game, each slot is identical except for the age vector at the beginning of the slot. A sample path of play in the repeated game is an infinite sequence $s = (s_0, \ldots, s_t, \ldots)$, where $s_t \in A_t$, $A_t$ is the profile of strategy spaces defined for the one-shot game $G$ in Section~\ref{sec:one-stage}. Source $k$'s utility in the repeated game setting is the discounted sum payoff
\begin{align}
    U_{k,\infty}(s|\age{k}{0}) = -(1 - \discountfactor)\sum_{t=1}^\infty \discountfactor^{t-1} \age{k}{t}(s_{t-1}, \age{k}{t-1}),
\end{align}
for sample path of play $s$ and discount factor $0 \le \alpha < 1$.

A strategy in the repeated game is a complete contingent plan of action -- a prescription of what to do given any slot and state. A subgame perfect Nash equilibrium of $G^\infty$ is a profile of strategies in the repeated game such that given any slot and the state (age vector) at its beginning, the continuation strategies from that slot onwards constitute a Nash equilibrium in the continuation game.

We study two strategies, the access-fair $U^\infty$ and age-fair $M^\infty$, which are quite simple from the perspective of repeated games in that their prescription is invariant to slot and state. In the access-fair strategy $U^\infty$, in any slot and for any age vector (state) at the beginning of a slot, any source is selected uniformly and randomly with probability $1/n$ to transmit its update. Essentially, we repeat the one-stage correlated strategy $U$ for all slots. As a result, in any slot the chosen source transmits while others idle. In the age-fair strategy $M^\infty$, in any slot and for any age vector at the beginning of a slot, the source whose age of updates at the monitor is maximum, is chosen to transmit with probability $1$. Note that, from Theorem~\ref{thm:optimalPproperties_lsucc_leq_lcol}, when $\lsucc \le \lcol$, strategy $M^\infty = O^\infty$, where $O$ was defined as the one-stage optimal strategy obtained by solving~(\ref{opt:onestageopt}). However, from Theorem~\ref{thm:optimalPproperties_lsucc_gt_lcol}, $O$ isn't the same as $M$ when $\lsucc > \lcol$.

The repeated game model has discounting and bounded payoffs per slot. Therefore, by the one-shot deviation principle, which is simply a game-theoretic version of the Bellman's principle of optimality used in dynamic programming, the following conclusion holds. Let $C^\infty = (C, C, \ldots)$ be a repeated game path of play generated by playing strategy $C$ repeatedly. Let $D_kC^\infty = (D_kC_{-k}, C, \ldots)$ be the repeated game path of play generated when source $k$ deviates in the current slot while other sources continue to follow $C$ and subsequently, from the next slot onward, all sources follow $C^\infty$. The repeated game strategy $C^\infty$ is an SPNE if, for every source $k$, $U_{k,\infty}(C^\infty|\Delta_{k,t}) \ge U_{k,\infty}(D_kC^\infty|\Delta_{k,t})$.

\subsection{Discounted Payoff for Strategy $U^\infty$}
At the end of any slot, age for source $k$ resets to $\lsucc$ with probability $1/N$. The conditional PMF of age of source $k$ in slot $t$, given $\age{k}{0}$ is
\begin{align*}
\small
P[\age{k}{t}(U^\infty) = x| \age{k}{0} = \ageconstant] = 
\begin{cases} 
\frac{(N-1)^{m-1}}{N^m},&x = m\lsucc, \\
& m = 1,\ldots,t,\\
\frac{(N-1)^t}{N^t},&x= \ageconstant + t\lsucc.
\end{cases}
\end{align*}

The first term in the distribution represents a successful transmission for source $k$ followed by $m - 1$ successful transmissions for sources other than $k$. The last term captures the case where source $k$ is not selected to transmit at all, increasing the age by $t\lsucc$. The expected age $\SlotExpectedAge{U^\infty}$ of a source $k$ in any given slot $t$ can be calculated in closed-form (we skip the expression to optimize use of space).
% The expected age of a node $k$ in any given slot $t$ can be calculated as
% \begin{equation}
%     \begin{split}
% \SlotExpectedAge{U^\infty} & = \frac{\lsucc}{N-1} \sum_{m = 1}^{k} m\left(\frac{N-1}{N}\right)^m \\
%  &+ t\lsucc\left(\frac{N-1}{N}\right)^k +  \ageconstant\left(\frac{N-1}{N}\right)^t.\nonumber
% \end{split}
% \end{equation}
The discounted payoff for source $k$ is
{%\small
\begin{equation} \label{eqn:U_infinity_payoff}
    U_{k,\infty}(U^\infty|\age{k}{0}) = -(1-\discountfactor) \sum_{t=1}^{\infty} \discountfactor^{t-1} \SlotExpectedAge{U^\infty}. %\nonumber
\end{equation}}

\subsection{Access and age-fair strategies are SPNE when $\lsucc \leq \lcol$}
Theorems~\ref{thm:access-fair-spne} and~\ref{thm:age-fair-spne} summarize our findings for when $\lsucc \leq \lcol$. Both the access-fair and the age-fair strategies are SPNE. They result in a good use of the shared access while disincentivizing a source to unilaterally deviate from the strategy.

\begin{theorem}
When $\lsucc \leq \lcol$, for any $\alpha \in [0,1)$ and any $n\in\{2,3,\ldots\}$, the access-fair strategy $U^\infty$ is an SPNE.
% Access-fair strategy $U^\infty$ is an SPNE, for $0 \le \alpha < 1$, if $\lsucc \leq \lcol$.
 \label{thm:access-fair-spne}
\end{theorem}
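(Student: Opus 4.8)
The plan is to apply the one-shot deviation principle exactly as set up just above the statement: since $U^\infty$ prescribes, slot by slot and independently of the state, a draw of a common signal naming a single transmitter, it suffices to show that for every source $k$, every slot $t$, and every age vector at the beginning of slot $t$, the path $D_kU^\infty$ gives $k$ a weakly lower discounted payoff than $U^\infty$. Conditioning on the realized signal in slot $t$ leaves only two cases: (i) the signal names $k$ (probability $1/n$), so $k$'s only deviation is to idle; and (ii) the signal names some $j\neq k$ (probability $(n-1)/n$), so $k$'s only deviation is to transmit. (Note that, by Lemma~\ref{lem:vec_pIsIndivRational}, the access-fair $\vvec p$ lies in $\allstrategyp$ when $\lsucc\le\lcol$, so the stage payoffs are individually rational; this is consistent with, but not needed for, the SPNE argument.)

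In case (i), under $U^\infty$ source $k$ transmits alone, event $S_k$ occurs, and $\age{k}{t+1}=\lsucc$. If $k$ idles, no source transmits, the slot is idle (event $I$), and $\age{k}{t+1}=\age{k}{t}+\lidle$. Since the age at the start of any slot is at least $\lsucc$, this equals at least $\lsucc+\lidle>\lsucc$, so the deviation strictly increases $\age{k}{t+1}$. In case (ii), under $U^\infty$ exactly one source $j\neq k$ transmits, event $S_{-k}$ occurs, and $\age{k}{t+1}=\age{k}{t}+\lsucc$. If $k$ transmits, then both $k$ and $j$ transmit, event $C$ occurs, and $\age{k}{t+1}=\age{k}{t}+\lcol$. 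Here the hypothesis $\lsucc\le\lcol$ enters: $\age{k}{t}+\lcol\ge\age{k}{t}+\lsucc$, so the deviation weakly increases $\age{k}{t+1}$.

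To finish, I would observe that in both cases the deviation (a) weakly increases the age term $\age{k}{t+1}$ that appears, with weight $(1-\discountfactor)\discountfactor^{t-1}$, in the discounted sum, and (b) leaves $k$ facing $U^\infty$ from slot $t+1$ onward but starting from a weakly larger age $\age{k}{t+1}$. For (b) it remains to note that the expected discounted continuation payoff under $U^\infty$ is non-increasing in the starting age: along any fixed realization of future signals, $k$'s age in each later slot is either $\lsucc$ (after the first later slot in which $k$ is named, independent of the starting age) or the starting age plus a fixed multiple of $\lsucc$ (before that slot), hence non-decreasing in the starting age, and averaging over signals and discounting preserves the ordering. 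Combining (a) and (b) gives $U_{k,\infty}(D_kU^\infty|\age{k}{t})\le U_{k,\infty}(U^\infty|\age{k}{t})$ for each realization of the current signal, hence in expectation, for every $k$, every slot, every state, and every $\discountfactor\in[0,1)$ and $n\ge 2$; by the one-shot deviation principle, $U^\infty$ is an SPNE.

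There is no deep obstacle: the result holds for all $\discountfactor$ precisely because a one-shot deviation is weakly harmful both in the current slot and in the continuation, so no patience is required. The only points needing care are (1) correctly enumerating the feasible one-shot deviations given the correlated-signal structure (idling when named yields an idle slot; transmitting when not named yields a collision), (2) using $\lsucc\le\lcol$ for the collision deviation and the standing assumption $\age{k}{t}\ge\lsucc$ for the idling deviation, and (3) stating the continuation-monotonicity step precisely rather than leaving it implicit.
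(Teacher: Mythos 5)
Your proposal is correct and follows essentially the same route as the paper: invoke the one-shot deviation principle, split into the two feasible deviations (idling when named, transmitting when not named), compare the stage payoffs using $\age{k}{t}\ge\lsucc$ and $\lsucc\le\lcol$, and use monotonicity of the continuation payoff under $U^\infty$ in the starting age. You merely spell out two details the paper leaves terse — the coupling argument for continuation monotonicity (which the paper asserts from its payoff formula) and the collision-deviation case (which the paper dismisses as "similar").
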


 \begin{proof}
A one-shot deviation by source $k$ can happen in the following ways. Source $k$ was selected to transmit in a slot but stayed idle. In this case, the age of source $k$ increments by $\lidle$. The other possible deviation is when a source other than $k$ is selected to transmit in a slot and $k$ also transmits during the slot. In this case, the age of source $k$ will increase by $\lcol$. Given the one-shot deviation principle, it suffices to assume that source $k$ deviated in slot $0$.

For the first type of deviation, the repeated game payoff obtained on deviating, is  $U_{k,\infty}(D_kU^\infty|\age{k}{0}) = -(\age{k}{0} + \lidle) + \discountfactor U_{k,\infty}(U^\infty|\age{k}{0} + \lidle)$, where $-(\age{k}{0} + \lidle)$ is the payoff obtained by the deviating source at the end of slot $0$. The second term is the repeated game payoff, discounted by $\alpha$, obtained slot $1$ onward for an age of $\age{k}{0} + \lidle$ at the beginning of slot $1$. In the absence of the deviation, when the source transmits instead of staying idle, the payoff is $U_{k,\infty}(U^\infty|\age{k}{0}) = -\lsucc + \discountfactor U_{k,\infty}(U^\infty|\lsucc)$. Note that $-\lsucc > -(\age{k}{0} + \lidle)$. This is because $\age{k}{0} \ge \lsucc$ and $\lidle > 0$. Also, from~(\ref{eqn:U_infinity_payoff}), $U_{k,\infty}(U^\infty|\lsucc) \ge U_{k,\infty}(U^\infty|\age{k}{0} + \lidle)$. Thus the deviation is not beneficial for any $0 \le \alpha < 1$.

The proof is similar for the second type of deviation.
 \end{proof}

\begin{theorem}
When $\lsucc \leq \lcol$, for any $\alpha\in [0,1)$ and any $n\in\{2,3,\ldots\}$, the age-fair strategy $M^\infty$ is an SPNE.
%Age-fair strategy $M^\infty$ is an SPNE, for $0 \le \alpha < 1$, if $\lsucc \leq \lcol$.
\label{thm:age-fair-spne}
\end{theorem}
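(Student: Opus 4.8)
The plan is to apply the one-shot deviation principle, exactly as in the proof of Theorem~\ref{thm:access-fair-spne}. A unilateral one-shot deviation by source $k$ in a slot is of one of two kinds: (i) source $k$ has the (tie-broken) maximum age, so $M^\infty$ prescribes that it transmit, but it idles instead, making the slot idle and increasing source $k$'s age by $\lidle$; or (ii) source $k$ does not have the maximum age, so $M^\infty$ prescribes idling, but it transmits anyway, colliding with the prescribed transmitter $j^*$ and increasing every source's age by $\lcol$. By the principle it suffices to assume the deviation occurs in slot $0$, and to compare $U_{k,\infty}(M^\infty|\age{k}{0})$ with $U_{k,\infty}(D_kM^\infty|\age{k}{0})$.

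Case (ii) is straightforward. Writing payoffs recursively, playing $M^\infty$ gives $-(\age{k}{0}+\lsucc)+\discountfactor\,U_{k,\infty}(M^\infty|N)$, where in the post-slot state $N$ source $j^*$ is reset to $\lsucc$ and every other age grows by $\lsucc$; deviating gives $-(\age{k}{0}+\lcol)+\discountfactor\,U_{k,\infty}(M^\infty|D)$, where $D$ is the slot-$0$ age vector with every entry raised by $\lcol$. Since $\lcol\ge\lsucc$, the stage payoff is no larger after deviating, and since $N$ is dominated componentwise by $D$ (indeed by the slot-$0$ vector plus $\lsucc\mathbf{1}$, which is itself $\le D$), the deviation also cannot improve the continuation once we know that $U_{k,\infty}(M^\infty|\cdot)$ is non-increasing in the age vector. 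I would establish that monotonicity by a coupling argument: raising any source's age only postpones that source in the $M^\infty$ service order and inflates source $k$'s own ages, so it can only hurt $k$. Both effects point the same way, so case (ii) never pays, for any $\discountfactor\in[0,1)$.

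Case (i) is the crux, and it is where the argument must genuinely depart from that of Theorem~\ref{thm:access-fair-spne}: under $M^\infty$ the continuation payoff depends on the entire age vector, not just source $k$'s own age, and the continuation \emph{after} this deviation can actually be better for $k$, so the ``worse stage payoff plus weakly-worse continuation'' shortcut is unavailable. The structural fact I would lean on is that, since $k$ had the (tie-broken) maximum age before idling, it still has it after the idle slot, so $M^\infty$ has $k$ transmit in slot $1$; hence the deviation merely inserts one idle slot and pushes $k$'s own successful-transmission schedule back by exactly one slot, perturbing every other source's age by the common amount $\lidle$, which washes out of $k$'s trajectory once those sources have each been served once. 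Together with the fact --- which I would prove separately --- that under $M^\infty$ a source served in a slot sees its age cycle through $\lsucc,2\lsucc,\dots,n\lsucc$ from then on, this means the deviating path reproduces $k$'s non-deviating age sequence with the single extra term $\age{k}{0}+\lidle$ prepended. The comparison then collapses to: $U_{k,\infty}(M^\infty|\age{k}{0})\ge U_{k,\infty}(D_kM^\infty|\age{k}{0})$ precisely when $\age{k}{0}+\lidle$ is at least the discounted average of $k$'s $M^\infty$ age sequence, which is a convex combination of $\lsucc,2\lsucc,\dots,n\lsucc$; and since $\age{k}{0}$ is the \emph{maximum} age in the slot, it should dominate this average.

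I expect that final inequality to be the main obstacle. The discounted average of $\lsucc,2\lsucc,\dots,n\lsucc$ tends to $(n+1)\lsucc/2$ as $\discountfactor\to1$, so ``$\age{k}{0}+\lidle$ beats it'' is transparent only once $\age{k}{0}$ sits comfortably above $\lsucc$, i.e.\ once the age vector is away from the near-degenerate configurations in which all ages are close to $\lsucc$. I would therefore work on the set of age vectors actually encountered on play, where after a short transient the ages cycle through $\lsucc,\dots,n\lsucc$ and the slot maximum is $n\lsucc$ (for which the bound holds with room to spare), and handle any remaining small-age vectors directly. Establishing the $M^\infty$ trajectory characterisation and the state-monotonicity fact, and then pinning down this last comparison, is the bulk of the work; everything else parallels the access-fair proof.
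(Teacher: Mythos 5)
Your case~(i) reduction is the right one, and it is in fact the same reduction the paper uses: since the deviator is still the maximum-age source after the idle slot, the deviation merely prepends the stage payoff $-(\age{k}{0}+\lidle)$ and delays the entire $M^\infty$ age stream by one slot, so everything hinges on whether $\age{k}{0}+\lidle$ beats the discounted average of that stream. But you leave precisely this inequality unproved, and it is not a removable technicality: it genuinely fails on admissible states. Subgame perfection quantifies over all subgames, and the only standing assumption on age vectors is that every entry is at least $\lsucc$; states with all ages near $\lsucc$ are therefore legitimate (e.g.\ as the initial condition at $t=0$), not "remaining small-age vectors" that can be waved through. Concretely, take $n=2$, $\age{1}{0}=\lsucc+\epsilon$, $\age{2}{0}=\lsucc$. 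Under $M^\infty$ source $1$'s ages are $\lsucc,2\lsucc,\lsucc,2\lsucc,\dots$, with unnormalized value $-\lsucc(1+2\discountfactor)/(1-\discountfactor^2)$; idling once yields $-(\lsucc+\epsilon+\lidle)$ plus $\discountfactor$ times the same stream, and the cooperate-minus-deviate difference is $\epsilon+\lidle-\lsucc\,\discountfactor/(1+\discountfactor)$, which is negative for all $\discountfactor$ above roughly $(\lidle+\epsilon)/(\lsucc-\lidle-\epsilon)$. So "handle any remaining small-age vectors directly" cannot succeed as stated; closing the argument would require restricting the class of states (or initial conditions) and proving that the designated transmitter's age always dominates the discounted cycle average there, which is the real content of the theorem. (For what it is worth, the paper's own proof settles this step by comparing $-\lsucc+\discountfactor U_{k,\infty}(M^\infty|\lsucc)$ with $-(\age{k}{0}+\lidle)-\discountfactor\lsucc+\discountfactor^2 U_{k,\infty}(M^\infty|\lsucc)$ and invoking only $\age{k}{0}\ge\lsucc$, i.e.\ it passes over the $\discountfactor(1-\discountfactor)U_{k,\infty}(M^\infty|\lsucc)$ mismatch that you correctly identify as the obstacle; you have located the delicate point but not resolved it.)

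A secondary problem is your auxiliary lemma for case~(ii): componentwise monotonicity of $U_{k,\infty}(M^\infty|\cdot)$ in the age vector is false in the own-age coordinate, because under $M^\infty$ a larger own age buys earlier service. For $n=2$ with the opponent at $2\lsucc$, raising $k$'s age from $\lsucc$ to $3\lsucc$ changes $k$'s continuation ages from $2\lsucc,\lsucc,2\lsucc,\dots$ to $\lsucc,2\lsucc,\lsucc,\dots$, which is strictly better for every $\discountfactor\in[0,1)$; so the coupling argument you sketch does not go through in the direction you need. The paper avoids this by noting that the collision shifts every source's age by the same amount, so the service order (and hence the slot in which $k$ is next served) is unchanged, and then comparing the two payoff streams slot by slot using $\lcol\ge\lsucc$ and the one-slot delay of $k$'s reset; that order-preservation argument, rather than a general monotonicity lemma, is what makes case~(ii) work.
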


\begin{proof}
The strategy $M^\infty$ selects the source with the highest age to transmit. The two possible one-shot deviations are (a) source $k$ was selected to transmit but stayed idle, leading to an age increment of $\lidle$ for all sources, and (b) some other source was selected for transmission but source $k$ also transmitted, leading to an age increment of $\lcol$ for all sources. 

Note that since both deviations result in an increase in the ages of all sources, the ordering of sources based on their ages stays unchanged. That is sources must take turns in the same order post the slot in which the deviation occurred, as they would have in the absence of a deviation.

Consider the first type of deviation. Since it results in an idle slot, the payoff of source $k$ when it deviates is $U_{k,\infty}(D_kM^\infty|\age{k}{0}) = -(\age{k}{0} + \lidle) + \discountfactor U_{k,\infty}(M^\infty|\age{k}{0} + \lidle)$. Given that the source $k$ had been chosen for transmission in slot $0$, the source had the largest age at the beginning of slot $0$ and will have the largest age at the beginning of slot $1$. It will be chosen to transmit in slot $1$, resulting in its age being reset to $\lsucc$. We can rewrite the above payoff as  $U_{k,\infty}(D_kM^\infty|\age{k}{0}) = -(\age{k}{0} + \lidle) - \discountfactor \lsucc +  \discountfactor^2 U_{k,\infty}(M^\infty|\lsucc)$. Compare this with the payoff in the absence of deviation, which is $U_{k,\infty}(M^\infty|\age{k}{0}) = -\lsucc +  \discountfactor U_{k,\infty}(M^\infty|\lsucc)$.  Further note that $\age{k}{0} \ge \lsucc$. Deviation results in a worse payoff for all $0 \le \alpha < 1$.

The second type of deviation results in a collision slot for all. The payoff that results from deviation is $U_{k,\infty}(D_kM^\infty|\age{k}{0}) = -(\age{k}{0} + \lcol) + \discountfactor U_{k,\infty}(M^\infty|\age{k}{0} + \lcol)$. This corresponds to a sequence of one-stage payoffs $-(\age{k}{0} + \lcol)$, $-(\age{k}{0} + \lcol + \lsucc)$, and so on, till the slot when source $k$ becomes the source with the largest age. At the end of this slot, the source will receive a payoff of $-\lsucc$. In the absence of deviation, the source gets a payoff $U_{k,\infty}(M^\infty|\age{k}{0}) = -(\age{k}{0} + \lsucc) + \discountfactor U_{k,\infty}(M^\infty|\age{k}{0} + \lsucc)$. The corresponding sequence of one-stage payoffs is $-(\age{k}{0} + \lsucc)$, $-(\age{k}{0} + \lsucc + \lsucc)$, and so on, till source $k$ gets its chance to transmit. Comparing the sequences of one-stage payoff for when the source deviates and when it doesn't, the source obtains larger one-stage payoffs ($\lsucc \le \lcol$) when it doesn't deviate. Eventually, both when the source deviates and otherwise, the source will transmit and its age will be reset to $\lsucc$, giving it a one-stage payoff of $-\lsucc$. However, it will take one additional slot to happen when the source deviates. Thus deviation makes the source's repeated game payoff worse.
\end{proof}

\subsection{SPNE for when $\lsucc > \lcol$ is elusive}

\begin{theorem}
When $\lsucc > \lcol$, neither the access-fair strategy $U^\infty$ nor the age-fair strategy $M^\infty$ is an SPNE.
    % Access-fair $U^\infty$ and age-fair $M^\infty$ strategies are not SPNE when $\lsucc > \lcol$.
\label{thm:notSPNElSuccGtlCol}
\end{theorem}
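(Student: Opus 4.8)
The plan is to invoke the one-shot deviation principle recalled above: to show that $U^\infty$ (resp.\ $M^\infty$) is \emph{not} an SPNE it suffices to exhibit one subgame in which some source has a strictly profitable one-shot deviation. The single structural fact I would exploit is that now $\lsucc > \lcol$: a source that, under the prescribed strategy, would sit through a successful-transmission slot of length $\lsucc$ can instead transmit and force a collision, making its own age grow by only $\lcol$ in that slot --- an immediate gain of $\lsucc-\lcol>0$. This is the mirror image of the $\lsucc\le\lcol$ situation used in the proofs of Theorems~\ref{thm:access-fair-spne} and~\ref{thm:age-fair-spne}, where forcing a collision is (weakly) worse in the current slot and, since it raises every source's age by the same amount, of no help in the continuation either.

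For $U^\infty$ the argument I have in mind is clean and holds for every $\alpha\in[0,1)$ and every $n\ge 2$. Fix a source $k$ and any slot in which some \emph{other} source is the one selected --- an event of probability $(n-1)/n>0$, hence a genuine subgame. If $k$ deviates from $\mathcal{I}$ to $\mathcal{T}$, the slot becomes a collision of length $\lcol$, so every source's age (in particular $\age{k}{t}$) grows by $\lcol$ rather than $\lsucc$. Because under $U^\infty$ the selection in every future slot is i.i.d.\ uniform and independent of the age vector, the continuation value $U_{k,\infty}(U^\infty\,|\,\cdot)$ depends only on $k$'s own age at the start of the continuation and, by the conditional PMF of $\age{k}{t}$ together with~(\ref{eqn:U_infinity_payoff}), is non-increasing in that age. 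Hence the deviation strictly improves $k$'s current-slot payoff by $\lsucc-\lcol$ and leaves it with a (weakly) smaller age in every subsequent slot, so it is strictly profitable; $U^\infty$ is therefore not an SPNE.

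For $M^\infty$ I would again use the collision-deviation, but here it is no longer a free lunch and that is the crux. Take the lowest-age source $k$; under $M^\infty$ it stays idle for the next $n-1$ slots and then transmits. If $k$ deviates to $\mathcal{T}$ in the current slot it forces a collision; since all ages then rise by the common amount $\lcol$, the age \emph{ordering} is preserved, so the sole effect is to delay the entire round-robin by one slot. Writing out $k$'s age sequence along the two paths and differencing the $\alpha$-discounted sums: $k$ \emph{gains} $\lsucc-\lcol$ in each of the $n-1$ slots it was going to idle anyway, \emph{loses} roughly $\age{k}{0}+\lcol+(n-2)\lsucc$ in the one slot where, absent the deviation, it would have transmitted, and thereafter is affected only by a one-slot phase shift of a period-$n$ pattern whose per-period sum is zero. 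The net discounted effect is thus $(\lcol-\lsucc)\sum_{t=0}^{n-2}\alpha^t$ plus correction terms carrying a factor $\alpha^{\,n-1}$; this is negative --- so the deviation is strictly profitable --- whenever $\alpha$ is small (for fixed $n$), and, for any fixed $\alpha<1$, whenever $n$ is large (the penalty is discounted by $\alpha^{\,n-1}$ while the gain accumulates over the preceding $n-1$ slots). Hence $M^\infty$ is not an SPNE. The main obstacle is precisely this accounting: unlike the $U^\infty$ case, the continuation value after the deviation is not monotone in a way one can use for free, so one must genuinely bound the discounted continuation loss --- one delayed success plus a damped zero-average tail --- against the immediate gain; the cleanest route is to first pass to the canonical period-$n$ round-robin configuration that the dynamics reach within $O(n)$ slots, which makes every sum in the comparison geometric.
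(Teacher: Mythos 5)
Your treatment of $U^\infty$ is fine and, in fact, complete for every $\alpha\in[0,1)$ and $n\ge 2$: when told to idle, transmitting converts the slot into a collision, improving the stage payoff by $\lsucc-\lcol>0$, and since future selections under $U^\infty$ are age-independent, the continuation payoff depends only on the deviator's own (now strictly smaller) age and is monotone in it. Note that this is a different route from the paper, which disposes of $U^\infty$ simply by citing Lemma~\ref{lem:vec_pIsIndivRational} (failure of individual rationality of the stage payoffs); your direct one-shot-deviation argument is arguably more self-contained.

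The $M^\infty$ half, however, has a genuine gap. Your accounting of the collision deviation is essentially correct (a gain of $\lsucc-\lcol$ in each of the $n-1$ slots the deviator would idle anyway, a discounted loss of about $\age{k}{0}+\lcol+(n-2)\lsucc$ from its turn being delayed one slot, plus a phase-shifted tail), but, as you concede, it yields a profitable deviation only for small $\alpha$ at fixed $n$, or for large $n$ at fixed $\alpha$. The theorem carries no such restriction—it is the counterpart of Theorems~\ref{thm:access-fair-spne} and~\ref{thm:age-fair-spne}, which are stated for all $\alpha\in[0,1)$ and all $n\ge 2$—so your argument does not establish the claimed result. Moreover, the gap is not just bookkeeping that a ``canonical round-robin'' reduction would fix: for $n=2$ the discounted difference between deviating and conforming equals $(\lcol-\lsucc)+\alpha(\age{k}{0}+\lcol)-\frac{\alpha^{2}\lsucc}{1+\alpha}$, and since $\age{k}{0}\ge\lsucc$, this is strictly positive (deviation strictly unprofitable) for $\alpha$ near $1$ whenever $\lsucc\le 4\lcol$; the idle deviation is unprofitable as well. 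So in part of the parameter range covered by the theorem, the very deviation you rely on is not profitable, and the one-shot-deviation route cannot be completed as proposed. The paper's own proof proceeds differently: it observes that under $M$ every non-selected source receives the stage payoff $-(\age{k}{t}+\lsucc)$, which lies below the minmax payoff $-(\age{k}{t}+\lcol)$ when $\lsucc>\lcol$, and concludes from this failure of individual rationality that $M^\infty$ (and $U^\infty$) cannot be SPNE. To prove the statement as written you would need either to restrict the parameter range or to switch to an individual-rationality-style argument of that kind (and justify why a stagewise violation of the minmax bound precludes subgame perfection in this state-dependent setting), rather than sharpening the deviation accounting.
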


\begin{proof}
Consider $U^\infty$. It applies the strategy $U$ in every slot. Recall from Lemma~\ref{lem:vec_pIsIndivRational} that when $\lsucc > \lcol$, the payoff vectors from $U$ are not individually rational for all age vectors at the beginning of a slot. As a result, $U^\infty$ isn't an SPNE when $\lsucc > \lcol$. Next, consider $M^\infty$. This applies $M$ in every slot. Let source $i$ have the largest age at the beginning of a certain slot $t$. $M$ will choose this source for transmitting its update with probability $1$. For any other source $k\ne i$, the payoff at the end of the slot will be $-(\age{k}{t} + \lsucc)$, which is smaller than the minmax payoff of $-(\age{k}{t} + \lcol)$ when $\lsucc > \lcol$. The payoff is not individually rational. $M^\infty$ can't be a SPNE.
\end{proof}

Next we consider the strategy $O^\infty$ for when $\lsucc > \lcol$. $O^\infty$ has every stage game use the strategy $O$. Recall that the strategy $O$ is one-stage optimal and uses a probability vector in every slot that is obtained by solving~(\ref{opt:onestageopt}), described in Theorem~\ref{thm:optimalPproperties_lsucc_gt_lcol}. As per the Folk theorem, $O^\infty$ is an SPNE for \emph{sufficiently patient} players, since its payoff vectors lie in the set of individually rational payoffs. We are unable to analytically find the range of the discount factor $\discountfactor$ for which $O^\infty$ is a SPNE. However, Monte Carlo simulations show that $O^\infty$ isn't an SPNE for $\alpha \le 0.99$, even for a small number of sources. 
\begin{figure}[t]
  \centering
  {\includegraphics[width=0.9\columnwidth]{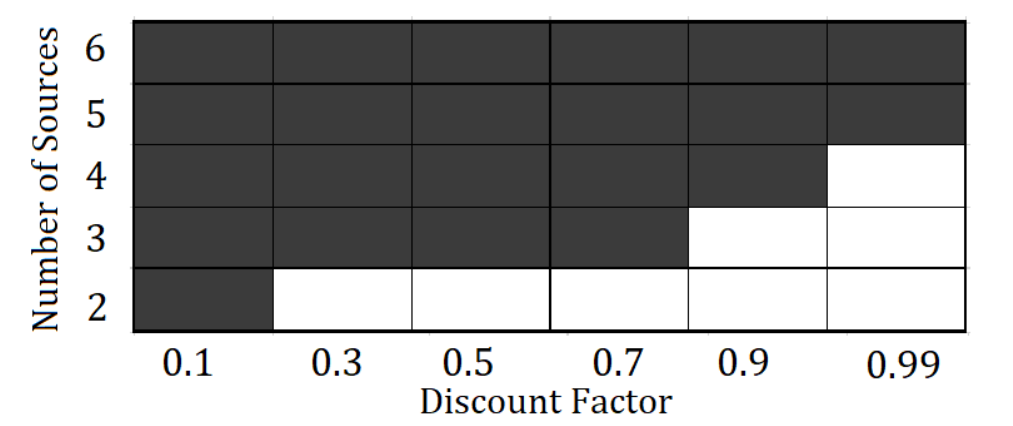}}\label{fig:heatmap}
  %\enspace
\caption{\small The white cells correspond to $(n, \discountfactor)$ for which $O^\infty$ is an SPNE. We set $\lidle = \beta << 1$, $\lsucc = (1+\beta)$ and $\lcol = 0.1(1+\beta)$, with $\beta = 0.01$. The selection of slot lengths is such that the ratio $(\lsucc-\lidle)/\lsucc$ for the simulation setup is approximately the same as that for $802.11$ac based  devices and $802.11$p based vehicular networks.}
\label{fig:PayoffComparison}
\vspace{-0.15in}
\end{figure}

Figure~\ref{fig:PayoffComparison} depicts the region in which $O^\infty$ is an SPNE for number of sources $n\in \{2,3,4,5,6\}$ and $\discountfactor \in \{0.1, 0.3, 0.5, 0.7, 0.9, 0.99\}$. For each $(n,\alpha)$, we simulate $100$ initial age vectors, where in the elements of each vector are chosen uniformly and randomly in the range $[n, 3n]$. For each choice of initial age vector, we compare $U_{k,\infty}(O^\infty|\age{k}{0})$ and $U_{k,\infty}(D_kO^\infty|\age{k}{0})$. We choose the source $k=1$. Note that since we are only considering unilateral deviations by a certain source, it is sufficient to have source $1$ deviate and check whether the same results in better payoff than cooperation. The two infinite game payoffs, one when source $1$ cooperates and the other when source $1$ deviates, for a given initial age vector $\age{k}{0}$, are estimated to be the average of the sum-discounted payoffs that are obtained over $10000$ sample paths, each $10000$ slots long. The estimates were similar for when we simulated up to $50000$ slots in a sample path. 
\section{Conclusion}
We modeled the interaction between selfish sources sharing a wireless access as an infinitely repeated game. The sources would like the age of their information at a monitor to be as small as possible. We devise strategies for the repeated game setting that are proven to be a subgame perfect Nash equilibrium for when the successful transmission slot of the medium access is shorter than the collision slot. Such strategies enable cooperative use of the wireless spectrum while ensuring that no source has an incentive for one-shot deviation from the strategy. However, the SPNE strategies for the above setting, were shown not to be a SPNE for when the successful transmission slot is longer than the collision slot. This setting requires further investigation. 
\begin{spacing}{0.99}
\bibliographystyle{IEEEtran}
\bibliography{citations}
\end{spacing}
\appendix
\subsection{Proof of Lemma~\ref{lem:vec_pIsIndivRational}}
\label{FirstAppendix}
%\begin{proof}
\begin{enumerate}
    \item Consider when $n=2$ and $\lsucc \le \lcol$.The minmax payoff~(\ref{eqn:minmaxpayoff}) for any node $k$, is $-(\age{k}{t} + \lsucc)$. Consider any probability vector $\vvec{p}$ with $p_C = 0$. The expected payoff that results from such a vector is $-(\lsucc p_k + (\age{k}{t} + \lsucc) p_{-k} + (\age{k}{t} + \lidle )\p{I})$. Since $\age{k}{t} + \lsucc$ is at least as large as each of $\lsucc, \age{k}{t} + \lsucc, \age{k}{t} + \lidle$, and $p_k + p_{-k} + \p{I} = 1$, $p_C = 0$ is sufficient for $\vvec{p}$ to result in individually rational payoffs.

    For when $n>2$ and $\lsucc \le \lcol$, the minmax payoff is $-(\age{k}{t} + \lcol)$. Any probability vector $\vvec{p}$ with probability $\p{C} \le 1$ will result in a payoff at least as large as the minmax payoff. Hence satisfying the individual rationality constraints. 
    
    \item Now consider $\lsucc > \lcol$. The minmax payoff is $-(\age{k}{t} + \lcol)$ for $n\ge 2$. The one-stage expected payoff on playing the uniform strategy is not individually rational if $- E[\age{k}{t+1}|\age{k}{t}] = - (1/n)\lsucc - ((n-1)/n)(\age{k}{t-1} + \lsucc) > - (\age{k}{t-1} + \lcol)$. That is if $\age{k}{t} < n(\lsucc - \lcol)$.
\end{enumerate}

% 2-player game: The minmax payoff for node $k$ is $\age{k}{t} + \lsucc$, thus any probability vector $\probabilityvector \in \pvec$ with $\p{C} = 0$ would satisfies individual rationality constraint ($E[U_{k,t}(\probabilityvector)|\age{k}{t-1}] \geq \age{k}{t-1} + \lsucc$) as $\age{k}{t} +\lsucc \geq \age{k}{t} +\lidle \geq \lsucc $. 

%satisfy individual rationality constraint ($E[U_{k,t}(\probabilityvector)|\age{k}{t-1}] \geq \age{k}{t-1} + \lsucc$) as $\age{k}{t} +\lcol \geq \age{k}{t} +\lsucc \geq \age{k}{t} +\lidle \geq \lsucc $

 %if A node $k$ would always deviate from this strategy if $\age{k}{t} < n(\lsucc - \lcol)$.
%\end{proof}

\subsection{Proof of Theorem~\ref{thm:optimalPproperties}}
\label{Theorem 1}
%\begin{proof}
\begin{enumerate}
    \item Suppose some probability vector $\probabilityvector^* = [\p{1}, ..., \p{N}, \p{I}, \p{C}]$ satisfies the constraints and is optimal with $\p{C} > 0 $, then changing $\p{C} = 0$ and $\p{I} = \p{C} + \p{I}$, i.e., another vector $[\p{1}, ..., \p{N}, \p{C} + \p{I}, 0]$ is in the feasible set and contradicts by achieving lower objective function value than the optimal vector $\probabilityvector^*$. Therefore an optimal probability vector will always have $\p{C} = 0$ 
    \item On expanding the objective function \ref{opt:onestageopt} we get, $\sum_{j=1}^{n}\age{j}{0} - \age{k}{0} + n\lsucc$ term corresponding to $p_k$, $\sum_{j=1}^{n}\age{j}{0} + n\lidle$ term corresponding to $p_I$. The coefficient corresponding to $\p{I}$ in the objective function is minimum if for all nodes $k$, $\Delta_k < N(\lsucc - \lidle)$. This implies setting $p_I = 1$ would minimize the objective function.
\end{enumerate}
%\end{proof}

\subsection{Proof of Theorem~\ref{thm:optimalPproperties_lsucc_leq_lcol}}
\label{Theorem 2}
%\begin{proof}
Minimizing a linear function assigns maximum weight to the term with the minimum coefficient. Therefore, the optimal solution is where a node with the highest age gets to reset its age to $\lsucc$ with probability $1$. This lies in the feasible set (\ref{lem:vec_pIsIndivRational}).  
%\end{proof}

\subsection{Proof of Theorem~\ref{thm:optimalPproperties_lsucc_gt_lcol}}
\label{Theorem 3}
%\begin{proof}
Consider the probability simplex of payoffs corresponding to pure strategies. The optimal solution  $\probabilityvector^*$ to objective function (\ref{opt:onestageopt}) lies on the boundary of this simplex. There can be two scenarios depending on the initial ages of nodes in the network.

Case 1: The payoff corresponding to a strategy where all nodes stay idle lies inside the simplex.

The equation of the hyperplane joining the payoffs corresponding to the strategies where one node successfully transmits is given by $H(\mathbf{n},b)$, where $\mathbf{n} = (\Delta_2\Delta_3...\Delta_N, \Delta_1\Delta_3...\Delta_N, \Delta_1\Delta_2...\Delta_{N-1})$ is normal to the hyperplane, and $b = \mathbf{n} \cdot \mathbf{x}$, where $\mathbf{x}$ is some point on the hyperplane. The point corresponding to the payoff where all nodes staying idle lie inside the simplex if the harmonic mean of source ages, $N\prod_{k=1}^{N} \age{k}{t}/\sum_{k=1}^{N} (\prod_{j=1, j \neq k}^{N} \age{j}{t}) \geq N(\lsucc - \lidle)$. Hence $p_I = 0$.

 Since $p_I = p_C = 0$, the individual rationality constraints reduce to $p_k \geq \frac{\lsucc - \lcol}{\age{k}{t}}$. In order to minimize the objective function (~\ref{opt:onestageopt}), We assign the lowest possible probability to all the nodes except the one with the maximum age. The node with the maximum age, denoted as $j$, the probability is set as $p_j = 1 - \sum_{i=1, i \neq j}^{N}p_i$. 
 
Case 2: Similarly, the payoff corresponding to a strategy where all nodes stay idle lies on the boundary of the simplex if $N\prod_{k=1}^{N} \age{k}{t}/\sum_{k=1}^{N} (\prod_{j=1, j \neq k}^{N} \age{j}{t}) < N(\lsucc - \lidle)$. The optimal solution would be on the boundary of the simplex, i.e., a line joining payoff corresponding strategy where all nodes sit idle and successful transmission for the node with a maximum age. Individually rational constraints result in $p_I \geq \frac{\lsucc -\lcol}{\lsucc -\lidle}$. In order to minimize the objective  function, We assign the lowest possible probability of all nodes idling to the lower bound i.e.,  $p_I = \frac{\lsucc -\lcol}{\lsucc -\lidle}$ and node with maximum age transmits with probability $1-p_I$.
\end{document}